\theoremstyle{plain}
\newtheorem{proposition}{Proposition}
\newcolumntype{C}[1]{>{\centering\let\newline\\\arraybackslash\hspace{0pt}}m{#1}}
\journal{Journal of Parallel and Distributed Computing}
\begin{document}

\begin{frontmatter}

\title{BSSSQS: A Blockchain Based Smart and Secured Scheme for Question Sharing in the Smart Education System}


\author[mymainaddress]{Anik Islam}
\author[mysecondaryaddress]{Md. Fazlul Kader}
\author[mymainaddress]{Soo Young Shin\corref{mycorrespondingauthor}}
\cortext[mycorrespondingauthor]{Corresponding author}
\ead{wdragon@kumoh.ac.kr}



\address[mymainaddress]{Department of IT Convergence Engineering, Kumoh National Institute of Technology, Gumi, 39177, South Korea}
\address[mysecondaryaddress]{Department of Electrical and Electronic Engineering, University of Chittagong, Chittagong, 4331, Bangladesh}

\begin{abstract}
Existing education systems are facing a threat of question paper leaking (QPL) in the exam which jeopardizes the quality of education. Therefore, it is high time to think about a more secure and flexible question sharing system which can prevent QPL issue in the future education system. Blockchain enables a way of creating and storing transactions, contracts or anything that requires protection against tampering, accessing etc. This paper presents a new scheme for smart education, by utilizing the concept of blockchain, for question sharing. A two-phase encryption technique for encrypting question paper (QSP) is proposed. In the first phase, QSPs are encrypted using timestamp and in the second phase, previous encrypted QSPs are encrypted again using a timestamp, salt hash and hashes from previous QSPs. These encrypted QSPs are stored in the blockchain along with a smart contract which helps the user to unlock the selected QSP. An algorithm is also proposed for selecting a QSP for the exam which picks a QSP randomly. Moreover, a timestamp based lock is imposed on the scheme so that no one can decrypt the QSP before the allowed time. Finally, security is analyzed by proving different propositions and the superiority of the proposed scheme over existing schemes is proven through a comparative study based on the different features.
\end{abstract}

\begin{keyword}
Blockchain \sep distributed system \sep  encryption \sep Internet of Things  \sep smart education \sep randomization algorithm
\end{keyword}

\end{frontmatter}


\section{Introduction}
Blockchain is a brilliant discovery which has brought a revolution in the realm of technology. This revolutionary idea was inaugurated by Satoshi \\Nakamoto~\cite{1}, a person or group used this pseudonym in order to keep their individuality concealed. Blockchain is a peer-to-peer (P2P) distributed ledger, introduced within the concept of \textit{bitcoin} cryptocurrencies, which stores the history of transactions with the payload~\cite{2}. However, blockchain technology is not bound within doing financial transactions. It has started to draw the interest of the stakeholders of a wide span of industries which covers finance, healthcare, utilities, real estate, government sector, and digital content distribution etc. ~\cite{3, 4, 5, 6, 8284127,8405627}, due to its flexibility and security mechanism. The concept of blockchain was employed in the database named "BigchainDB", a database which is capable of piling large quantity data~\cite{7}. 
\par Blockchain is a data structure which is distributed and replicated amongst the participated nodes in the network. When a transaction happens in the network, that transaction has to experience validation called consensus mechanism, a process where some of the participants reach a mutual agreement in allowing that transaction ~\cite{8}. Those who perform consensus mechanism, are called miners and they have to execute a computationally hard puzzle. After that, a block is added to the chain including that transaction. Each block contains the hash of the previous block. That is why it is called blockchain. The first block of the blockchain is called genesis block. In the place of previous block's hash, it contains 0 or some other value which means it does not refer the previous block's hash. In most of the blockchain application, genesis block is hard coded~\cite{9}. However, blockchain is tamper proof. If anyone makes any changes in the transaction then the hash of that block also changes which breaks the chain. Then a user has to mine the chain again in order to make the changes valid. Moreover, that user has to make changes in other participants node which makes it very difficult. 
\par Furthermore, asymmetric cryptography is adopted in order to issue transactions in blockchains~\cite{8284128}. In blockchain, every user has two keys, such as 1) public key, and 2) secret key. The public key plays a role as an address in the system. On the other hand, the secret key is utilized to sign transactions~\cite{8425613}. Thus blockchain creates a trustless network where all users’ identities are hidden and parties can transact securely without trusting each other~\cite{2}. 

\par Internet of Things (IoT) has brought another revolution in the realm of technology \cite{10, 11, 12, 8378971}. Recently, IoT has put its mark in the education sector~\cite{13, 14}. Smart campus, smart classroom, digital content, smart exam, remote learning, campus safety etc. are the results of IoT. In order to improve the quality of education, a lot of researches are going on. However, IoT technology is facing security risks. Though there are a lot of existing security protocols, they are not adequate to provide security because of the decentralized structure of IoT. Entities in IoT need reliable and tamper-proof protection from attacks like denial-of-sleep and denial-of-service~\cite{15}. Blockchain can mitigate this issues with its security infrastructure~\cite{16, 17}.
\par Examination is one of the important parts of the education system because it not only evaluates the understanding of students but also forces them to study~\cite{18, 19}. However, there is a threat, named Question Paper Leaking (QPL), which can cause the fairness issues in the examinations. Nowadays, QPL is a serious issue throughout the world from university entrance examination to public examination, and the situation is worse in developing countries~\cite{20, 21, 22, 23}. The QPL can bring some serious outcome, such as (1) quality of education compromised and (2) erosion of ethical standards~\cite{23}.

\par ACT Inc, who is the creator of the United States’ most popular college entrance examination, canceled some college entrance exams after leaking the test materials~\cite{24}. In the United Kingdom, Brighton Hove and Sussex Sixth Form College canceled A-level physics exam after noticing the question paper leak on social media~\cite{25}. In China, a teacher was accused of leaking math test paper of the annual postgraduate entrance exam~\cite{26}. In the University of KwaZulu-Natal School of Applied Human Sciences, at least four exam papers in two subjects were leaked~\cite{27}. In Egypt, French language exam papers were posted after half an hour after the exam started~\cite{28}, and a version of Arabic exam was leaked on the first day of Thanaweya Amma examinations~\cite{29}. A teacher from Vietnam leaked final examination paper to the son of a neighbor~\cite{30}.
In Nepal, the question papers of International English Language Testing System examination~\cite{31} and Bachelor of Medicine and Bachelor of Surgery entrance examination of the Tribhuvan University Institute of Medicine~\cite{32} were leaked. In 2017, a number of incidents related to exam paper leak happened in Pakistan~\cite{33, 34, 35, 36, 37} and India~\cite{38, 39, 40}, respectively. In Bangladesh, Junior School Certificate and Secondary School Certificate examination questions were leaked~\cite{41, 42}. In Korea, a high school teacher was accused of leaking English test questions~\cite{43}. 

\par Though the aforementioned cases~\cite{24, 25, 26, 27, 28, 29, 30, 31, 32, 33, 34, 35, 36, 37, 38, 39, 40, 41, 42, 43}  only covers the QPL incidents happened in 2017, some countries face this problem almost in every year. Hence, it can be said that QPL happens not only within the developing and underdeveloped countries, but also in developed countries. In QPL incidents, not only the students, but also the teachers and authorities are involved. Therefore, it is required to develop a smart examination system which can share examination papers securely without the fear of QPL.

In order to digitalize the examination system, different ideas are shared in the literature~\cite{44, 45, 46, 47, 48, 49, 50}. In~\cite{44}, three models of web examination system, such as B/S, C/S, and B/S combined with C/S are discussed, along with the technical details and solution of the problem of concurrent data. Another web based examination system was proposed for distant and formal education in \cite{45}. In this system, teachers enter examination questions and number of the students. The system automatically shuffles questions and answers so that every student can get a different pattern of questions to prevent cheating. In \cite{46}, an online examination system for PE theory courses was proposed, where every user uses username and password given by the administrators to login the system. The system maintains the question distribution and the time schedule of examination. 
In \cite{47}, an online examination system was proposed, where MD5 encryption technique was exploited for protecting password and resources. Morevoer, a WEB-INF Directory is used so that students cannot access the contents directly without authorization. An examination management system, based on flat network, was demonstrated by \cite{48}. The system provides role based security, where students can only sign up and give examinations. In \cite{49}, a web-based examination system was proposed to integrate with existing learning management systems, whereas an online examination system based on TCP/IP client-server architecture and spiral model was proposed in~\cite{50}. These systems mainly focus on system design and overall management. However, they could not guarantee to solve the QPL incidents. 

\par From the recent incidents of QPL, it is obvious that all levels of people are involved. Moreover, social engineering, phishing etc., can loot anyone credential and can access data anytime. Furthermore, a system with large scale, like implementing a system for the national education system  which demands distributed computing, examination paper sharing and storing, are very vulnerable. Therefore, examination management systems need more than user credential and random question selection. Question sharing (QS) should be performed through a more robust system, where user credential will be less important. In such a system, even though unauthorized persons get the password, they could not be able to access the questions before exams. Blockchain prevents data from altering once it's being mined in the chain. Moreover, blockchain assists to maintain access permission and audit process seamlessly. Blockchain can be one of the promising techniques to provide security against the aforementioned threats.

In this paper, a blockchain based smart and secured QS scheme for smart education system (termed as BSSSQS) is proposed, a topic which has not been explored yet to the best of our knowledge. The major contributions of this paper are compiled as follows.
\begin{itemize}
\item The proposed scheme can increase the security of questions and provide seamless sharing among the examination centers.
\item A two-phase encryption technique (using different parameters as a key) is proposed in order to provide security over question. 
\item A randomization algorithm is proposed for selecting a question paper (QSP) before the exam.
\item A design of smart contract is proposed for managing user authorization over blocks along with decrypting QSP from blockchain. 
\item A timestamp based lock is proposed. In the proposed scheme, every exam center will hold the QSP, but no one can access (or request) it without system permission.   
\end{itemize}
The remaining sections of this paper are organized as follows: Section \ref{sec:2} illustrates the system model of BSSSQS. The different components of BSSSQS are also discussed in this section. In Section \ref{sec:3}, different transactions of BSSSQS are discussed in details. A security analysis along with performance comparisons among BSSSQS and others existing model is demonstrated in Section \ref{sec:4}. Finally, Section \ref{sec:5} draws a conclusion from this paper with future research directions. 

\section{Proposed Blockchain based System Model}
\label{sec:2}
We devise a QS scheme which uses blockchain concept in order to make it secure and smart. The proposed BSSSQS is a new way of sharing questions. There are four major entities such as Question Setter (QUS), Question Cloud (QC), BSSSQS Master (BSSSQS$_{\textup{master}}$), and BSSSQS Minion (BSSSQS$_{\textup{minion}}$) involved in this scheme\footnote{Note that the main task of Blockchain cloud in Fig. \ref{fig:fig1} is to create communication channels by which BSSSQS$_{\textup{master}}$ and BSSSQS$_{\textup{minion}}$ can communicate with each other.}, as shown in Fig. \ref{fig:fig1}. Each of the entities is described in the following subsections.
\begin{figure*}[ht]
	\centering
	\includegraphics[width=1\textwidth]{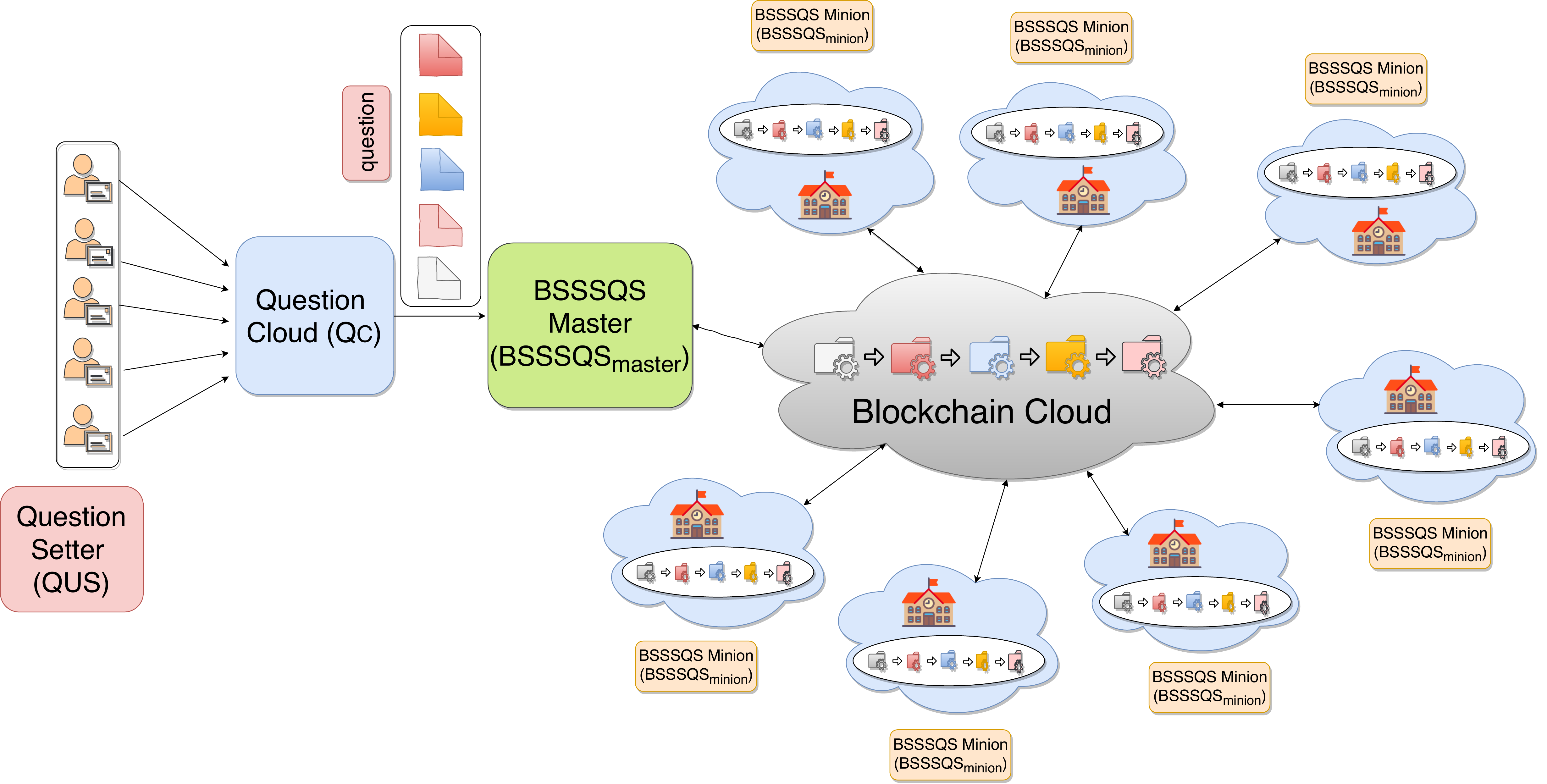}
	\caption{System model of the proposed BSSSQS.}
	\label{fig:fig1}
\end{figure*}

%

\begin{figure}
{\caption{Components of BSSSQS.}\label{fig:fig2}}
{\includegraphics[width=0.3\textwidth]{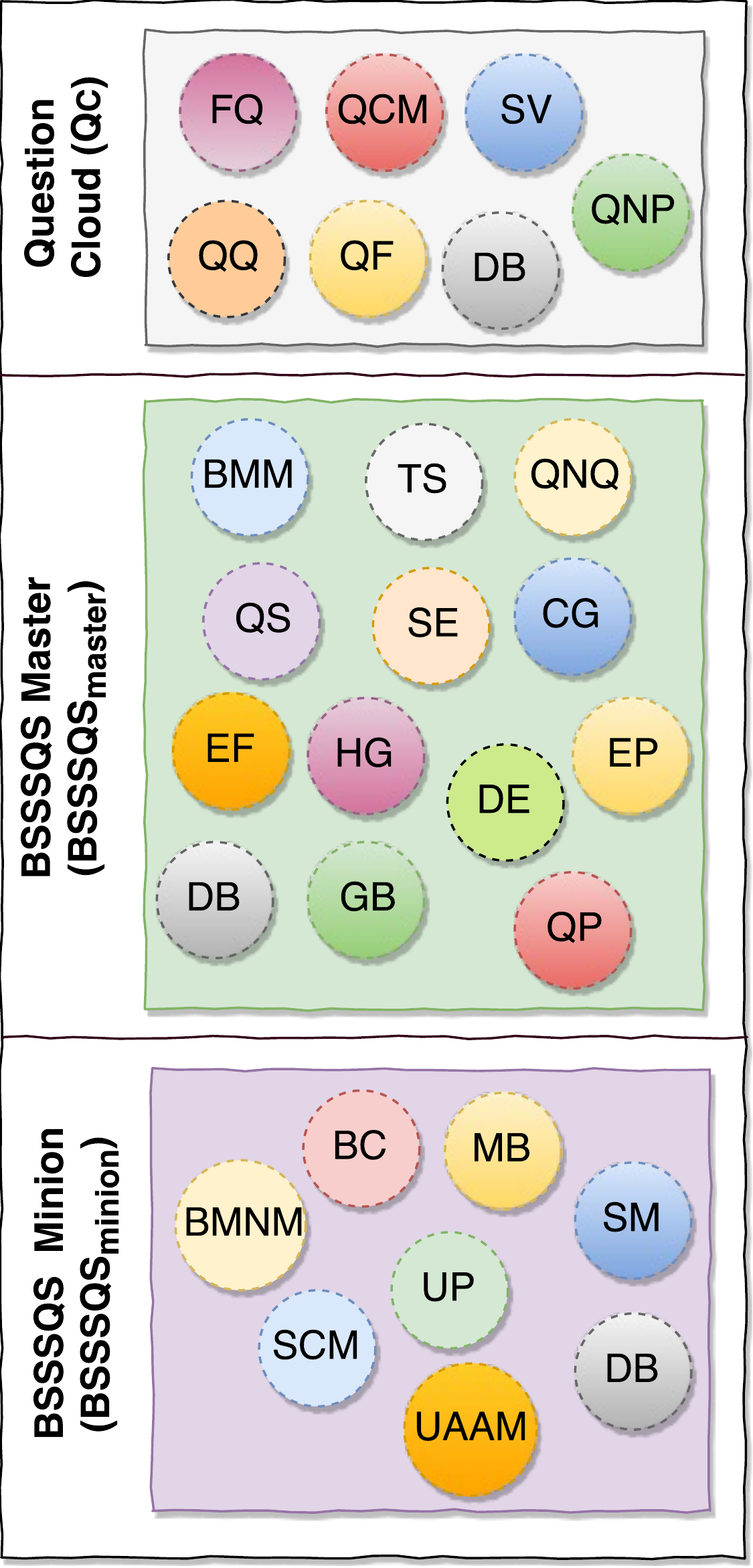}}
\end{figure}

	\subsection{Question Setter (QUS)} In this entity, actors have to submit questions for the exam. They have a deadline for submitting questions. They are the initial actors of preparing questions for the exam. They can modify or delete questions before submitting. But once they submit the question, they lose the option of performing any kind of CRUD (Create, Read, Update, and Delete~\cite{51}) operation. Every actor in this entity has a unique signature to communicate securely with the next entity called question cloud (QC).	
	\subsection{Question Cloud} This entity involves initial management for questions. After getting questions from QUS over a secure channel, it stores questions before sending it to the next entity. In this entity, questions experience modification including shuffling, adding, dropping etc. to prepare QSPs, where a QSP consists of a set of questions. Then, the  QSPs are handed over to the next entity. Note that after handing over to the next entity, QSPs are automatically removed from QC. This entity consists of seven modules, as shown in Fig. \ref{fig:fig2}.  The functions of each module are summarized as below.		
    \begin{itemize}
         \item Question Cloud Manager (QCM): Manages internal functionality.
        \item Signature Verifier (SV): Verifies signatures of requesters.
        \item Format Question (FQ): Formates and modifies question if it requires.
         \item Question pool (QNP): Stores modified questions.  
         \item Question Filter (QF): Sorts and makes sets of questions before sending to next entity.
        \item Question Queue (QQ): Stores questions temporarily before sending to next entity.
        \item Database (DB): Contains signature, course information etc.
            
    \end{itemize}    
	\subsection{BSSSQS Master (BSSSQS$_{\textup{master}}$)} This is a very complex and crucial entity. Questions experience final modification in this entity. This entity holds the information of all the connected minions (node) to which it sends QSPs. This entity also maintains communication with its minions through blockchain cloud. If any of the minions need verification or any kind of approval to exercise any action, BSSSQS$_{\textup{master}}$ responds right away. It not only maintains communication but also selects a QSP, which minion used to take the exam based on that picked out QSP. This entity is also very important for its encryption mechanism. It performs a two-phase encryption in each of the QSPs and creates a smart contract for BSSSQS$_{\textup{minion}}$. This entity consists of thirteen modules, as shown in Fig. \ref{fig:fig2}. The functions of each module are summarized as below.	
	\begin{itemize}		
		\item Question Queue (QNQ): Stores QSPs temporarily.
		\item BSSSQS Master Manager (BMM): Manages internal functionality.
		\item Timestamp (TS): Converts date and time to timestamp.
		\item Question Set (QS): Organizes QSPs based on the course list.
		\item Salt Engine (SE): Generates random salt hash for encryption.
		\item Data Encryptor (DE): Encrypts QSPs based on the selected timestamp. 
		\item Encryption Factory (EF): Encrypts QSPs based on the selected parameters.
		\item Hash Generator (HG): Generates hash of QSPs based on the content, last access time, nonce etc.
		\item Contract Generator (CG): Generates smart contract based on the input.
		\item Database (DB): Stores data of courses, minions, question hashes etc.
		\item Guffy Bot (GB): Monitors tasks and also waits for instructions like re-establish connection with inactive minions, select a QSP etc.
		\item Question Picker (QP): Selects a QSP for the exam.	
		\item Exclusion Pool (EP): Stores QSPs which gets illegal requests.	
	\end{itemize}	
	\subsection{BSSSQS Minion (BSSSQS$_{\textup{minion}}$)} This is an edge entity which contains exam center, e.g. school, college, university etc. This entity contains processed QSPs in the blockchain. No one can access QSPs without experiencing smart contract, timestamp verification, etc. Even though anyone manages to access blockchain storage, he/she can not see the contents of QSPs, because  QSPs are encrypted using a two-phase encryption technique in the proposed BSSSQS.
	This entity consists of eight modules, as shown in Fig. \ref{fig:fig2}. The functions of each module are summarized as below.	
	\begin{itemize}		
		\item BSSSQS Minion Manager (BMNM): Manages internal functionality.
		\item Blockchain (BC): Blockchain based storage which contains QSPs.
		\item Minion Bot (MB): Monitors internal activity like system activeness status (online/offline), crosscheck request validation etc.
		\item Smart Contract Manager (SCM): Handles authorization requests and decrypts QSPs.
		\item Database (DB): Contains decrypted QSPs and other local data.
		\item User Panel (UP): Provides user interface and manages tasks.
		\item Session Manager (SM): Contains information related to user activeness and authorization.
		\item User Authentication and Authorization Manager (UAAM): verifies user authentication and  and manages user authorization.	
	\end{itemize}
\subsection{BSSSQS$_{\textup{master}}$ vs BSSSQS$_{\textup{minion}}$}
BSSSQS contains private blockchain which has different kinds of access permission for different entities. BSSSQS$_{\textup{master}}$ has write permission, whereas BSSSQS$_{\textup{minion}}$ has only conditional read permission. BSSSQS$_{\textup{minion}}$ can access blockchain only through smart contract when BSSSQS$_{\textup{master}}$ sends permission notification to BSSSQS$_{\textup{minion}}$. Note that the final version of QSPs is stored only in the minion entities, though master entity keeps track (e.g., hash) of QSPs.
{\renewcommand{\arraystretch}{1.2}
	\begin{table}[h!]
	\centering
		{\caption{ Notations and their description}\label{table:1}}
		{\begin{tabular}{c|l} 
			\hline
			Notation & Description \\ [0.5ex] 
			\hline\hline
			$\eta$ & Nonce  \\ 
			$\rho$ & Prime number \\
			$Q$ & Question  \\
			$\tau_c$ & Current timestamp \\
			$\mathbb{S}_\hbar$ & Salt hash\\ 
			$QT$ & Questionnaire token  \\			
			$PW$ & Password \\
			$\Theta(.)$ & One way key generation function \\
			$\xi_{key}(.)$ & Encryption function using key \\			
			$\zeta_{key}(.)$ & Decryption function using key \\
			$\mathbb{Q}$ & Encrypted question\\ 
			$sm$ & Smart contract  \\ [1ex]
			\hline
		\end{tabular}}	
\end{table}

\section{Transactions in BSSSQS}
\label{sec:3}

In this section, we describe the different types of transactions performed in BSSSQS. The list of important notations with related descriptions which are used in this section, are summarized in Table \ref{table:1}.

	\subsection{Transactions between QUS and QC}

\begin{figure}
{\caption{Transactions between QUS and QC.}
		\label{fig:fig3}}
{\includegraphics[width=0.6\textwidth]{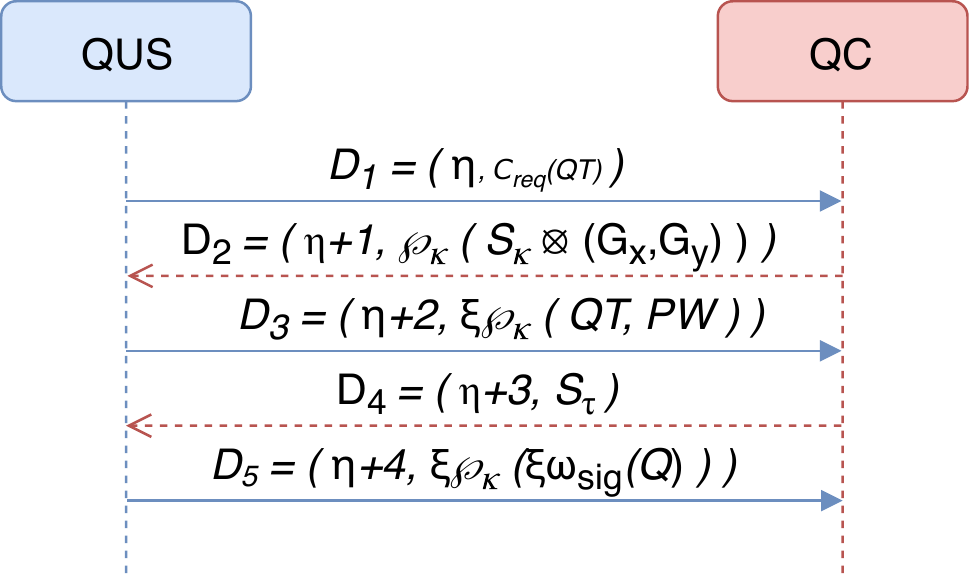}}
\end{figure}


	Basically two types of major transactions, such as authentication of QUS and questions hand over to QC, take place between QUS and QC. Every user in QUS has a unique signature, which is stored in QC. Each user has to prove his identity with proper credentials to send questions to QC.
The transactions are provided in the Fig. \ref{fig:fig3}. The proposed scheme considers that all communication between QUS and QC is done by employing asymmetric key encryption (i.e. elliptic curve digital signature algorithm (ECDSA)). Before sending questions, QUS sends a request to QC in order to get public key of QC so that QUS can communicate with QC in a secured channel. QUS sends request, containing a nonce $\eta$, by sending data $D_1 = ( \eta, \mathbb{C}_{req}(QT))$. After getting request from QUS, QC generates one time asymmetric keys (OTAK) for QUS so that QUS can transfer not only credentials but also questions employing OTAK. QC generates a secret key $S_{\kappa}$ and a public key $\wp_{\kappa}$. Let $\overline{\kappa}$ is the set of OTAK.

\begin{equation}
\label{eq:key}
\begin{gathered}
 \overline{\kappa} = \{ (\varphi,\psi) \; : \; \varphi \in S_{\kappa} \cap \psi \in \wp_{\kappa} \;  | \\ \; S_{\kappa} = \Theta((\eta + 1)* \rho, \tau_c,  QT, \mathbb{S}_\hbar) \cap \wp_{\kappa} = S_{\kappa} \otimes (G_x, G_y) \} 
\end{gathered}
\end{equation}

Here, $\rho$ is a large prime number, $\tau_c$ is current timestamp, $\mathbb{S}_\hbar$ is a salt hash and $G$ is a set of $(x, y)$ coordinates on the elliptic curve. When QC finishes generating OTAK, QC sends $\wp_{\kappa}$ by sending $D_2 = (n+1, \wp_{\kappa} = S_{\kappa} \otimes (G_x, G_y))$. Upon receiving $D_2$, QUS encrypts $QT$ and $PW$ employing $\xi_{\wp_{\kappa}}(QT,PW)$. Following this, QUS increases $n+1$ by 1 and sends data $D_3$ to QCM. When QC receives $D_3$, QC first decrypts data utilizing $\zeta_{S_{\kappa}}(\xi_{\wp_{\kappa}}(QT,PW))$ and checks validity of the credential that provided from QUS. If the credential is valid then QC returns  success token by sending $D_4 = ( \eta+3, S_\tau)$. Before sending questions, QUS needs to sign question with its digital signature in order to prove that the questions are coming from the intended person. First, QUS creates signature using $QT$ and $PW$. Let $\omega_{sig}$ is the signature.

\begin{gather}
\label{ques:sig}
 \omega_{sig} = \Theta(QT,PW) 
\end{gather}

QUS first encrypts question using $\omega_{sig}$ and then encrypts using $\wp_{\kappa}$. After the encryption, QUS sends $D_5 = ( \eta+4, \xi_{\wp_{\kappa}}(\xi_{\omega_{sig}}(Q)) )$. As QC receives data from QUS, QC decrypts $\xi_{\wp_{\kappa}}(\xi_{\omega_{sig}}(Q))$ by employing $\zeta_{S_{\kappa}}(\xi_{\wp_{\kappa}}(\xi_{\omega_{sig}}(Q)))$. QC generates signature employing Eq. (\ref{ques:sig}) and validates the identity of the person by decrypting question with sender's signature. 

	\subsection{Transactions between QC and BSSSQS$_{\textup{master}}$}	
	
	Here, transactions are divided into two main categories, as shown in Fig. \ref{fig:fig4}, such as 1) processing questions within different modules of QC, and 2) sending QSPs from QQ to BSSSQS$_{\textup{master}}$ for further processing. After the deadline of questions submission, FQ formates and modifies questions following the prescribed rules and regulations set by the exam authority, to prepare QSPs. Questions formatting and modification can contain actions like shuffling contents, adding or removing contents etc. Only the authorized person from exam committee can do that or the exam authority can make this process automatic using randomization and weighted policy. The QSPs are then sent to QNP. In QNP, questions are stored temporarily and wait for next instruction. After getting proper instructions from QCM, QSPs are sent to QF. QF selects some QSPs based on certain criteria and forwards these selected QSPs to QQ for gathering them before sending to BSSSQS$_{\textup{master}}$. When the collection is finished, QQ sends QSPs to BSSSQS$_{\textup{master}}$ through a proper secure channel.
	\begin{figure}
		\centering
		\includegraphics[width=1\textwidth]{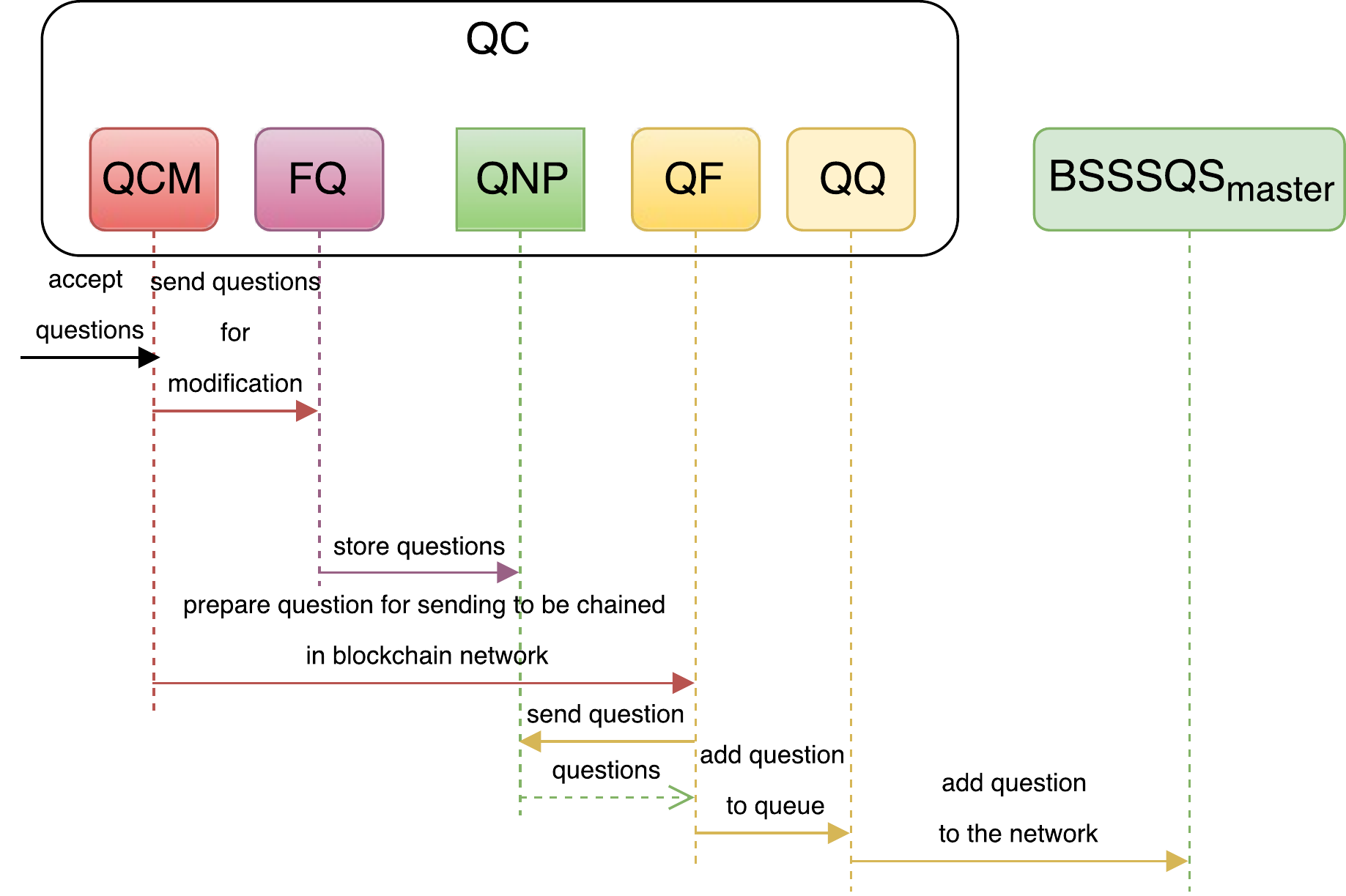}
		\caption{Transactions between QC and BSSSQS$_{\textup{master}}$.}
		\label{fig:fig4}
	\end{figure}	
	\subsection{Transactions between BSSSQS$_{\textup{master}}$ and BSSSQS$_{\textup{minion}}$}

	This segment covers transactions between BSSSQS$_{\textup{master}}$ and  BSSSQS$_{\textup{minion}}$, as shown in Fig. \ref{fig:fig5}. The primary tasks of BSSSQS$_{\textup{master}}$ are summarized as follows:
	\begin{itemize}
		\item To encrypt QSPs and send these encrypted QSPs to BSSSQS$_{\textup{minion}}$.
		\item To select a QSP for the exam and send permission notification to BSSSQS$_{\textup{minion}}$ for accessing the selected QSP.
	\end{itemize}
	BSSSQS$_{\textup{master}}$ plays very significant roles for providing security to QSPs. The proposed scheme considers that all communication between BSSSQS$_{\textup{master}}$ and BSSSQS$_{\textup{minion}}$ is done by employing symmetric key encryption. Initially, questions are stored in QNQ. After getting QSPs from QNQ, BMM picks current timestamp $\tau_c$ by sending request to TS. In the subsequent stage, BMM pulls course list from DB so that it can command other modules to make sets of blocks containing QSPs based on the course. Next, BMM sends QSPs to QS with $\tau_c$ and course list. Then, QSPs experience two-phase encryption which is described as follows.

	\subsubsection{First-phase encryption} The first phase of encryption is managed by QS.
	 Firstly, QS requests SE for generating salt hash $\mathbb{S}_\hbar$. After getting $\mathbb{S}_\hbar$ from SE, QS stores it for next phase of encryption. Secondly, QS sends QSPs to DE with $\tau_c$. DE is then encrypts QSPs with $\tau_c$. Let $Q_i$ is the $i^{th}$ number of QSPs. Therefore, $i^{th}$ number of QSPs that experience the first phase of encryption is written by 
	 \begin{equation}
	 \label{eq:eq1}
		 \mathbb{Q}^1_i = \xi_{\tau_c} \{Q_i, \tau_c \} 
		\end{equation}
	Finally, QS sends encrypted QSPs to EF with $\tau_c$ and $\mathbb{S}_\hbar$.
	\subsubsection{Second-phase encryption} 
	 The second and final phase of encryption happens in EF. Recall that in the blockchain, the first block is called \textit{genesis block}. EF generates a default genesis block with random text and encrypt it with $\tau_c$. After creating genesis block, EF encrypts QSPs and converts these QSPs  into blocks. Each block contains header and data. In the header, it carries the previous block hash, timestamp, last access time, block creation time, and nonce. Every time EF encrypts a QSP, it sends that encrypted QSP to HG. HG is then generate a hash from that encrypted QSP, so that the hash can participate in the next QSP encryption. Therefore, the encrypted QSP that experience the second phase of encryption is written by 	 
	 \begin{figure*}
		\centering
		\includegraphics[width=0.95\textwidth]{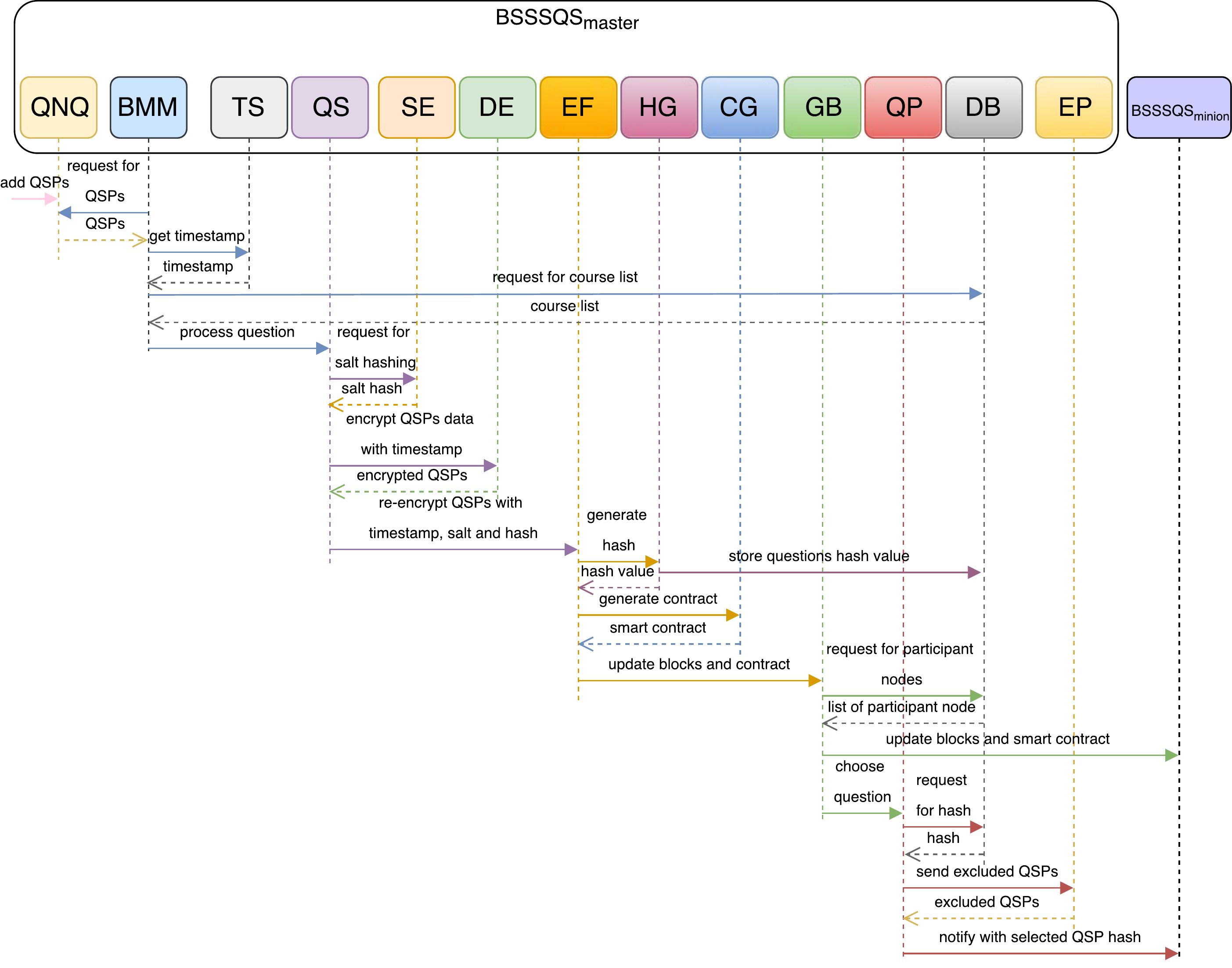}
		\caption{Transactions between BSSSQS$_{\textup{master}}$ and BSSSQS$_{\textup{minion}}$.}
		\label{fig:fig5}
	\end{figure*}
	 \begin{equation}
	 \label{eq:1}
	 \mathbb{Q}^2_i =\left\{
	 \begin{array}{@{}ll@{}}
	 \xi_{\tau_c} \{\mathbb{Q}^1_i, \tau_c \}, & \text{if}\ i=0 \\ \\
	 \xi_{\{\tau_c,\mu_{\mathbb{Q}^\hbar}, \mathbb{S}_\hbar \}} \{\mathbb{Q}^1_i, \tau_c,  \mu_{\mathbb{Q}^\hbar}, \mathbb{S}_\hbar \}, &  \text{if}\ i > 0
	 \end{array}\right.
	 \end{equation}
	 where $ \mu_{\mathbb{Q}^\hbar} = \bigcup_{\sigma=0}^{i-1} \mathbb{Q}^{\hbar}_\sigma$. The hash of $i^{th}$ QSP is generated by 
	  \[ \mathbb{Q}^{\hbar}_i = \mathbb {SHA}256(\mathbb{Q}^2_i) \]
    Then, HG stores the generated hash in DB for QSP selection. After that, EF commands CG to generate smart contract including the information of $\mathbb{Q}^\hbar$, $\tau_c$, and $\mathbb{S}_\hbar$. Fig. \ref{fig:fig6} illustrates a coding structure for smart contract. Smart contract contains hashes of QSPs, timestamp, and salt hash.

%
    As smart contract is created dynamically based on the exam, so $id$ and $title$ of contract title $exam\_\{id\}\_\{title\}$ are different for each exam as shown in Fig. \ref{fig:fig6}. When smart contract generation is finished, CG encrypts the smart contract with a timestamp $\tau_{sm}$ and a random salt hash $\mathbb{S}_r$. Let $\mathbb{C}_{sm}$ is the encrypted smart contract.
  
\begin{gather}	
\label{smart:enr}   
    \mathbb{C}_{sm} = \xi_{\{ \Theta(\tau_{sm},\mathbb{S}_r) \}} \{sm,\Theta( \tau_{sm}, \mathbb{S}_r) \} 
\end{gather}
   
   After the encryption, CG stores the key in DB. When exam comes, \\\text{BSSSQS$_{\textup{master}}$}  sends the key along with a selected question hash. After getting the encrypted smart contract from CG, EF sends blocks and smart contract to GB. GB monitors tasks like re-establish connection with inactive minions, select a QSP etc. As GB gets blocks and contract, it initiates the process of sending these resources to BSSSQS$_{\textup{minion}}$. At first, GB pulls existing minion list from DB. When GB get all of the lists, it starts to send blocks and contract to  BSSSQS$_{\textup{minion}}$ through blockchain cloud. GB remains always active for monitoring the system. When the time comes to select a QSP for exam, GB sends an instruction to QP for initiating the process of selecting a QSP for the exam along with $\mathbb{C}_{sm}$ and notifying minions about that QSP. Before initiating random engine for picking out a QSP, QP pulls hash of QSPs from DB. In the meantime, it also requests EP to send the hashes of excluded QSPs. When QP gathers all the required information, it starts the process of selecting a QSP as follows. Firstly, QP removes the excluded QSPs from the set of QSPs. Therefore, the set of filtered QSPs $\overline{Q_F}$ is written by 
\begin{gather}	
\label{qsp:excl}
 \overline{Q_F} = \{ \chi \; : \; \chi \in Q_P \cap \chi \notin Q_E  \}
\end{gather}
    	
    	where $Q_P$ is the set of all QSPs and $Q_E$ is the set of excluded QSPs.
    	Secondly, QP takes a collection of $10$ large prime numbers which is represented by 
\begin{equation}
\label{eq:prime}
\begin{gathered}
    	 \overline{\rho} = \{ \rho_i \in \mathbb P \; | \; 0 \leq i \leq 9 \} \; \\  where \; \mathbb P = \{ \chi \; : \; \chi \in \mathbb{N} \; \cap \; \chi \; is \; prime \}     	 
\end{gathered}
\end{equation}

\begin{figure}
\includegraphics[width=0.8\textwidth]{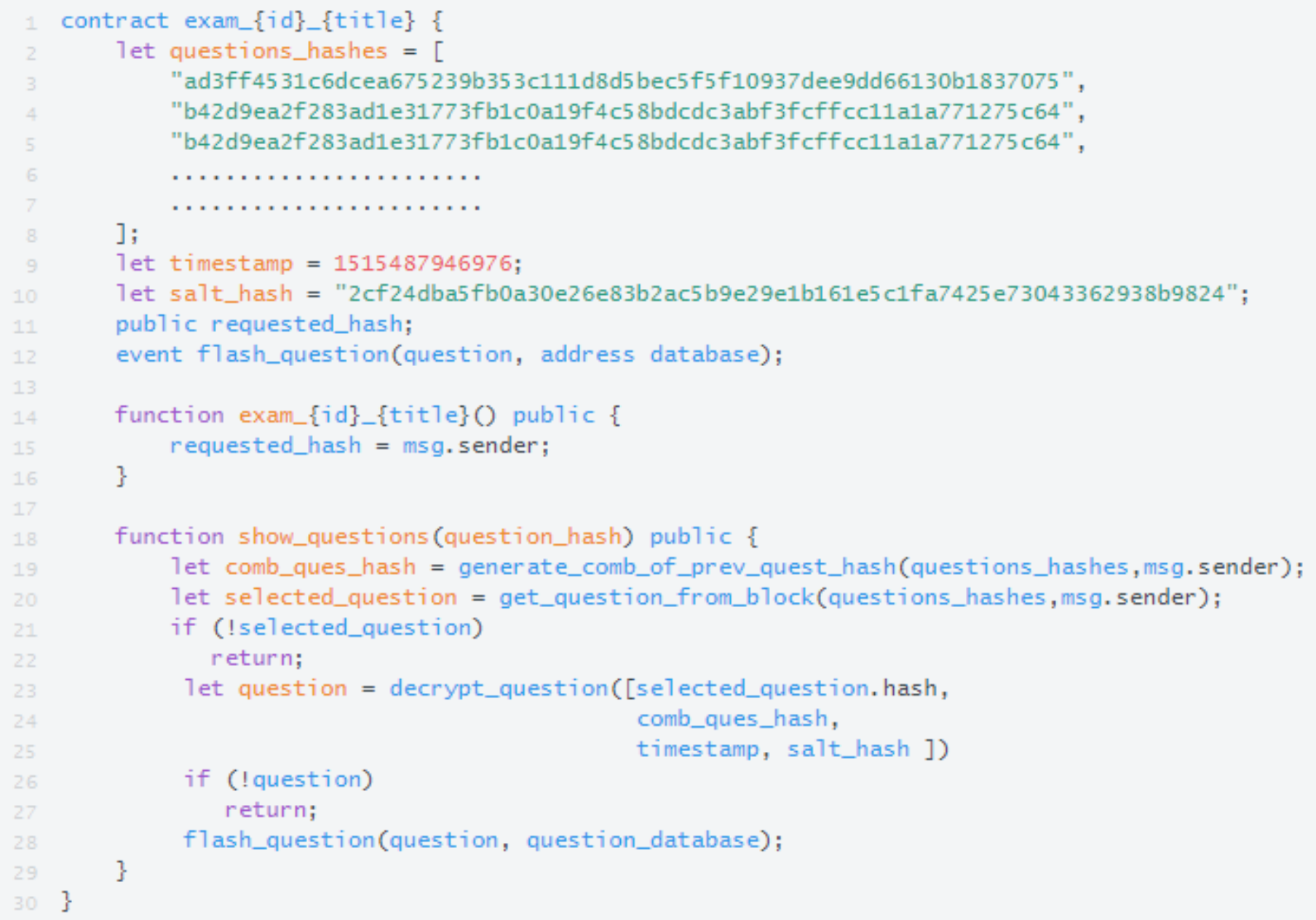}
\caption{Proposed coding structure of smart contract. }
\label{fig:fig6}
\end{figure} 

    	After that it converts the current date and time into a timestamp $\tau$. To select any two prime numbers from $\overline{\rho}$, QP takes last digit $d_\tau^l$ and second last digit $d_\tau^{sl}$ from $t$ to select prime numbers $p_{d_\tau^l}$ and $p_{d_\tau^{sl}}$, respectively. The selected QSP to take the exam is represented by 
\begin{gather}
\label{eq:qspsel}
 Q_s = \{(p_{d_\tau^l} - Q_{fn})*p_{d_\tau^{sl}}\}\mod Q_{fn}
\end{gather}
where $Q_{fn}$ is the total number of filtered QSPs and $p_{d_\tau^l} \geq Q_{fn}$. Let $\overline{\rho}  = \{ 179426549, 24066347, 179424793, 15486511, ...., 17142407 \}$, $\tau = 1515552555821$, $d_\tau^l = 1$, $d_\tau^{sl} = 2$, $p_{d_\tau^l} = 24066347$, $p_{d_\tau^{sl}} = 179424793$, and $Q_{fn} = 50$. The value of $Q_s$ is then computed as $Q_s = \{(24066347 - 50) * 179424793\}\mod 50 = 21$. It means that $21^{th}$ QSP is selected from the collection of  $50$ QSPs for the exam. As QP selects a QSP, it notifies all BSSSQS$_{\textup{minion}}$ about the selection through blockchain cloud.    	
	\subsection{Transactions in BSSSQS$_{\textup{minion}}$}	
	\begin{figure*}
		\centering
		\includegraphics[width=0.90\textwidth]{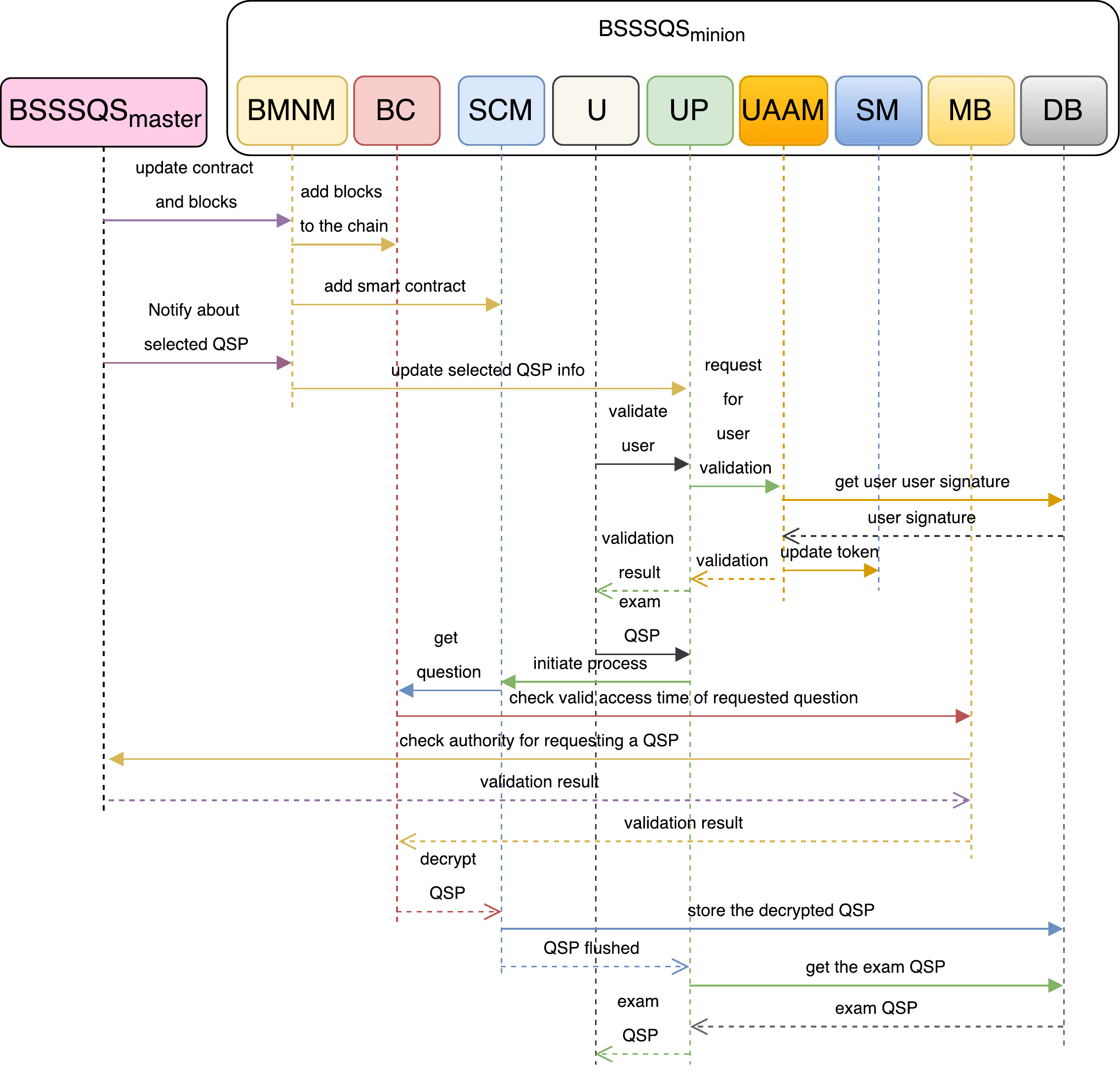}
		\caption{Transactions in BSSSQS$_{\textup{minion}}$.}
		\label{fig:fig7}
	\end{figure*}	
	This section covers transactions between different modules of BSSSQS$_{\textup{minion}}$ which is illustrated by Fig. \ref{fig:fig7}. Note that U in Fig. \ref{fig:fig7} represents a user in the system. The transactions are broadly categorized into three following  types.  
	\begin{enumerate}
		\item Storing and maintaining QSP blocks in blockchain. 
		\item Updating smart contract and 
		\item Alerting authority about the permission to access QSPs.
	\end{enumerate}
	 After getting blocks and smart contract from BSSSQS$_{\textup{master}}$, the following steps are performed sequentially in BSSSQS$_{\textup{minion}}$ as follows:
	 \begin{enumerate}
	 	\item BMNM sends blocks to BC and BC updates his data. 
	 	\item BMNM sends smart contract to SCM for the selected exam.
	 	\item BMNM gets QSP selection notification from BSSSQS$_{\textup{master}}$. The meaning of notification is that user can request for the process of decrypting the selected QSP. But If a user tries to request before allowed time, SCM doesn't transmit the request to blockchain. However, BMNM passes this notification to UP and UP alerts users when they enter the system.
	 	\item When a user tries to enter UP, he has to go through a validation process. If the session of that user is expired then UP requests UAAM to corroborate the user. UAAM sends a request to DB to send information regarding the requested signature. If the user is valid, DB returns user information, otherwise, it reruns empty data. When UAAM gets validation from DB, it stores a token in SM for maintaining user session.
	 	\item Every minion manages its users by itself. After that, UAAM notifies UP about the response. As users get a notification about the QSP and key for decrypting $\mathbb{C}_{sm}$, they request for QSP through UP. UP requests SCM to start the decryption process. Before going further, SCM sends a command to MB to check whether QSP is unlocked for access or not.
	 	\item  MB affirms authorization with BSSSQS$_{\textup{master}}$ through blockchain cloud. When SCM gets proper authorization, it transfers the request to BC. BC performs final authorization check with BSSSQS$_{\textup{master}}$ through MB. If BC gets an unauthorized request with a QSP, it changes access time and nonce, and mines the chain again. It changes the signature of all the QSPs and no one can get its key hash. 
	 	\item Whenever BC gets an affirmative result, it sends the QSP to SCM for decrypting.
	 \end{enumerate} 
   
   First, SM decrypts $\mathbb{C}_{sm}$ in order to decrypts the selected QSP. Let $D_{sm}$ is the decrypted smart contract.
   \begin{gather}
\label{eq:dec1}
    D_{sm} = \zeta_{\{ \Theta(\tau_{sm},\mathbb{S}_r) \}} \{\mathbb{C}_{sm},\Theta( \tau_{sm}, \mathbb{S}_r) \} 
   \end{gather}
   
   After getting $D_{sm}$, QSP decryption process begins. Let $\mathbb{Q}$ is a selected encrypted QSP. By utilizing  $ \mu_{\mathbb{Q}^\hbar}$ in Eq. (\ref{eq:1}), the first phase of decryption is written by  
	\begin{gather}
\label{eq:dec1}
	 D_{Q_1} = \zeta_{\{\tau_c,\mu_{\mathbb{Q}^\hbar}, \mathbb{S}_\hbar\}} \{\mathbb{Q},\tau_c, \mu_{\mathbb{Q}^\hbar}, \mathbb{S}_\hbar\} 
	\end{gather}
	
	Finally, $D_{Q_1}$ goes through the second phase of decryption which is written by 
	\begin{gather}
\label{eq:dec2}
	 D_{Q_2} = \zeta_{\tau_c} \{D_{Q_1},\tau_c\} 
	\end{gather}
	
	where $D_{Q_2}$ is the QSP which experiences the second phase of decryption. Note that the decrypted QSP is the original form of the QSP. After decryption, SC stores the QSP to DB and sends a notification to UP about the outcome. Now, users can access the QSP in original form. Finally, users can retrieve QSP from DB to take an exam.

\section{ Security and performance analysis}
\label{sec:4}

In this section, we discuss the security and performance of the proposed BSSSQS in order to illustrate to the feasibility of  BSSSQS.

\subsection{Security analysis}

In this section, we propose different propositions related to security and performance and also provide proof of each proposition.

\begin{proposition}
\label{prop:1}
The secret key of QC is well protected from the adversary.
\end{proposition}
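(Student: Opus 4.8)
The plan is to show that a passive adversary who observes the entire QUS--QC exchange cannot recover the secret key $S_{\kappa}$. First I would fix the adversary model as an eavesdropper on the public channel and enumerate exactly what is visible during key establishment: the request $D_1 = (\eta, \mathbb{C}_{req}(QT))$, which exposes only the nonce $\eta$, and the response $D_2 = (n+1, \wp_{\kappa} = S_{\kappa} \otimes (G_x, G_y))$, which exposes the public key. The adversary's task therefore reduces to inferring $S_{\kappa}$ from $\wp_{\kappa}$ and $\eta$ alone.

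I would then attack the problem along two independent lines, each of which already suffices. The first invokes the hardness of the elliptic curve discrete logarithm problem: since $\wp_{\kappa} = S_{\kappa} \otimes (G_x, G_y)$ is exactly a scalar multiplication of the public base point on the curve, recovering the scalar $S_{\kappa}$ is the ECDLP and is computationally infeasible for a suitably chosen curve. The second line invokes the construction in Eq. (\ref{eq:key}), namely $S_{\kappa} = \Theta((\eta+1)*\rho, \tau_c, QT, \mathbb{S}_\hbar)$: because $\Theta(.)$ is one-way, $S_{\kappa}$ cannot be reproduced without its pre-image, and of the four arguments three of them, the prime $\rho$, the questionnaire token $QT$, and the salt hash $\mathbb{S}_\hbar$, are never placed on the wire, while $\tau_c$ is internal to QC. Here I would tabulate which quantities are public and which are secret so the argument is explicit; note in particular that although $\eta$ is known, the factor $\rho$ masks the product $(\eta+1)*\rho$.

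Combining the two lines, even a brute-force attempt to guess the inputs of $\Theta$ must search the joint space of $\rho$, $QT$, and $\mathbb{S}_\hbar$, which is infeasible given a large prime, a token, and a salt hash, and independently the direct route through $\wp_{\kappa}$ is blocked by the ECDLP. I would conclude that $S_{\kappa}$ is protected under either assumption, so the proposition holds.

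The main obstacle I anticipate is making this genuinely rigorous rather than a mere catalogue of hardness appeals. Specifically, I would need to verify that no later message leaks $S_{\kappa}$ or a cheaply invertible function of it: checking that $D_3$ and $D_5$ only use the public key $\wp_{\kappa}$, that $D_4$ carries only a success token, and that $S_{\kappa}$ appears solely inside the decryption operator $\zeta_{S_{\kappa}}(.)$ computed locally at QC. Pinning down that one-wayness of $\Theta$ and the ECDLP are the only primitives relied upon, and that the freshness supplied by $\eta$ and $\tau_c$ defeats replay-based recovery of the key, is the part that will require the most care.
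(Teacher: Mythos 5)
Your proposal is correct and reaches the same conclusion, but it decomposes the adversary's options differently from the paper. The paper's proof treats the problem purely as one of \emph{guessing}: it notes that the ECDSA secret key is $256$ bits, so a direct guess succeeds with probability $2^{-256}$; it then adds that guessing the individual inputs of Eq.~(\ref{eq:key}) separately is likewise defeated by their randomness; and it closes with an argument you do not make, namely that the OTAK is ephemeral --- once the questions are transferred the key pair is deleted from QC, so even a long-running attack targets a key that will be obsolete. Your proposal instead identifies the actual public information ($\eta$ from $D_1$ and $\wp_{\kappa}$ from $D_2$) and reduces the public-key route to the elliptic curve discrete logarithm problem, while the reconstruction route is blocked by the one-wayness of $\Theta(.)$ together with the secrecy of $\rho$, $QT$, $\mathbb{S}_\hbar$, and $\tau_c$. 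This is a cryptographically sharper framing: since $\wp_{\kappa} = S_{\kappa} \otimes (G_x, G_y)$ is broadcast, the relevant barrier is ECDLP rather than the raw size of the key space, and your insistence on auditing $D_3$ through $D_5$ for leakage of $S_{\kappa}$ or an invertible function of it is a check the paper omits. What the paper's version buys in exchange is the ephemerality observation, which you should fold in (your remark about freshness of $\eta$ and $\tau_c$ against replay is adjacent but not the same point): even a successful offline attack is bounded by the lifetime of the one-time key pair.
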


\begin{proof}
Qc's secret key is generated utilizing nonce added by $1$ which is $\eta+1$, a large prime $\rho$, timestamp $\tau_c$ of that time, $QT$ and a random $\mathbb{S}_\hbar$, as mentioned in Eq. (\ref{eq:key}). Suppose, adversary A wants to steal the secret key of QC. Only one way to get the secret key of QC is to guess the private key to the best of our knowledge, as QC never share its private key to anyone. However, in ECDSA, the secret key is $32$ bytes or $256$ bits long. In order to guess the correct secret key, A needs to guess the sequence of $256$. For $256$ bits, there are $2^{256}$ possible sequences and among of them, only one can be the QC's secret key.  The probability of guessing the secret key, which is $256$-bit long, is $\dfrac{1}{2^{256}} = 2^{-256}$, which is practically not feasible. Moreover, if A wants to guess properties of the secret key individually, A has to face the probability of randomness in each property which is also practically not feasible. Furthermore, OTAK is temporary, when questions are transferred successfully, OTAK, which is generated for particular QUS, is removed from QC. Therefore, the secret key may become obsolete while A is still trying to guess the secret key. Thus, QC's secret key is well protected from the adversary.
\end{proof}

\begin{proposition}
Communication between QUS and QC is secure even in the presence of an eavesdropper.
\end{proposition}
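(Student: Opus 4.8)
The plan is to model the eavesdropper as a passive adversary $E$ that can intercept and record every message exchanged on the channel, namely $D_1,\dots,D_5$, but cannot inject or alter traffic. First I would catalogue exactly what $E$ observes in plaintext versus ciphertext: the nonce $\eta$ and the incremented counters travel in the clear, the public key $\wp_\kappa$ is revealed in $D_2$, while the sensitive payloads --- the questionnaire token $QT$, the password $PW$, and the questions $Q$ --- appear only inside the ciphertexts $\xi_{\wp_\kappa}(QT,PW)$ and $\xi_{\wp_\kappa}(\xi_{\omega_{sig}}(Q))$. The goal then reduces to showing that $E$ cannot recover any of these payloads from what it sees.

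The core of the argument is that every confidential item is encrypted under $\wp_\kappa$, so recovering it requires the matching secret key $S_\kappa$. I would invoke Proposition~\ref{prop:1}, which already establishes that $S_\kappa$ is infeasible to guess directly (probability $2^{-256}$). The only other avenue open to $E$ is to derive $S_\kappa$ from the observed public key through the relation $\wp_\kappa = S_\kappa \otimes (G_x,G_y)$; here I would appeal to the hardness of the elliptic-curve discrete logarithm problem underlying ECDSA, so that inverting the scalar multiplication is computationally infeasible. Hence neither $QT,PW$ nor $Q$ can be extracted from $D_3$ or $D_5$, and the inner layer $\xi_{\omega_{sig}}(\cdot)$ in $D_5$ gives a second encryption that $E$ must also break.

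I would then close two residual gaps. To rule out replay, I would point to the monotone nonce incremented at each step ($\eta,\eta+1,\dots,\eta+4$): a recorded message retransmitted later carries a stale counter and is rejected, so $E$ gains nothing by replaying. To limit the window for any offline attack, I would use the one-time nature of OTAK --- once the questions are delivered the key pair is discarded, so even a long-running cryptanalysis would target a key that is already obsolete, exactly as argued for Proposition~\ref{prop:1}.

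The main obstacle is that the statement is not a clean information-theoretic claim: the secrecy is only computational, resting entirely on the two assumed-hard problems (guessing a $256$-bit key and the ECDLP). The delicate point is to argue that observing $D_1,\dots,D_5$ jointly --- in particular seeing $\wp_\kappa$ alongside ciphertexts produced under it --- gives $E$ no algebraic shortcut beyond a direct attack on $S_\kappa$. I would handle this by observing that the scheme exposes nothing more than a standard public key together with ciphertexts encrypted under it, which is precisely the setting in which the ECDSA-style hardness assumption is posited, so no extra leakage is available to the eavesdropper.
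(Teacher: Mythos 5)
Your confidentiality argument tracks the paper's closely: both reduce the secrecy of $QT$, $PW$, and $Q$ to the infeasibility of obtaining $S_\kappa$, and both lean on Proposition~\ref{prop:1} for that. Your version is in places more careful than the paper's --- you explicitly separate what travels in the clear from what is ciphertext, name the elliptic-curve discrete logarithm problem as the reason $\wp_\kappa$ does not leak $S_\kappa$ (the paper only says reverse engineering is ``not feasible''), and add a replay argument via the monotone nonce that the paper does not make. The one-time nature of OTAK limiting the attack window also mirrors the paper.

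However, there is a genuine omission. You declare at the outset that $E$ is a \emph{passive} adversary that ``cannot inject or alter traffic,'' and the rest of your proof inherits that restriction. The paper's eavesdropper B is explicitly also trying to \emph{send false data to QC}, i.e.\ to impersonate QUS and inject forged questions, and roughly half of the paper's proof is devoted to defeating that attack: questions in $D_5$ carry the inner layer $\xi_{\omega_{sig}}(Q)$ with $\omega_{sig}=\Theta(QT,PW)$ per Eq.~(\ref{ques:sig}), QC recomputes $\omega_{sig}$ and uses it to verify the data source, and B cannot forge this because $\omega_{sig}$ is never exposed except encrypted under $\wp_\kappa$. You mention the inner layer only as ``a second encryption that $E$ must also break,'' treating it as extra confidentiality rather than as the origin-authentication mechanism it actually is. Since the proposition's claim of ``secure communication'' is interpreted by the paper to include data authenticity and not just secrecy, your proof as written establishes only part of the statement; you would need to drop the passivity assumption (or at least add an active case) and run the signature-forgery argument to close it.
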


\begin{proof}
The motive behind the communication between QUS and QC is to transfer questions from QUS to QC. There is a set of task has to be performed before sending questions. In order to send questions, QUS requires QC's public key to create a digital signature using Eq. (\ref{ques:sig}) and encrypt questions employing $\xi_{\wp_{\kappa}}(Q)$. However, when QUS request for OTAK, QC generates OTAK utilizing Eq. (\ref{eq:key}). When QUS gets $\wp_{\kappa}$, Firstly, QUS validates its identity by transmitting $QT$ and $PW$, which is encrypted using $\wp_{\kappa}$, to QC. Secondly, it generates a digital signature by applying Eq.(\ref{ques:sig}). Finally, QUS encrypts questions using $\wp_{\kappa}$ and sends back to QC signed with its signature. Suppose, there exists an eavesdropper named B between QC and QUS. B wants to steal credentials of QUS along with questions that QUS sends to QC and also wants to send false data to QC. B catches data between QUS and QC and B wants to extract $D_2$ and $D_5$ data, as shown in Fig. \ref{fig:fig3}. In order to extract data, B requires QC's private key and it's not available to anyone except QC. Moreover, there is no feasible solution to extract private key from public key by reverse engineering or guessing, as discussed in \textbf{Proposition \ref{prop:1}}. However, B wants to send false data encrypted by $\wp_{\kappa}$ to QC. But when QUS send questions to QC, QUS signs the question with its signature. From the signature, QC verifies the actual source of the data. As B needs the signature of QUS, B cannot send false data until it obtains QUS's signature and QUS's signature is not only publicly unavailable, but also QUS never share its signature with other people apart from sharing with QC in an encrypted form. As a result, B cannot achieve any of the aforementioned objectives and thus B's activity has no effect on the communication between QUS and QC. 
\end{proof}

\begin{proposition}

QSP selection in BSSSQS$_{\textup{master}}$ is totally random and is free from compromised QSPs.
\end{proposition}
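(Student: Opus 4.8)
The plan is to decompose the claim into its two independent assertions and dispatch them separately: first that the drawn index can never point to a compromised paper, and second that the index is unpredictable (``totally random''). These are logically distinct, and the first is much easier than the second.

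For the freedom-from-compromise assertion I would argue purely from the construction of the selection domain. By Eq.~(\ref{qsp:excl}) the engine operates on the filtered set $\overline{Q_F}$, which by definition collects exactly those $\chi \in Q_P$ with $\chi \notin Q_E$; every compromised paper has therefore already been removed before any index is drawn. The index $Q_s$ in Eq.~(\ref{eq:qspsel}) is a residue modulo $Q_{fn}$, the cardinality of $\overline{Q_F}$, so $Q_s \in \{0,1,\dots,Q_{fn}-1\}$ and always addresses a legitimate slot of $\overline{Q_F}$. Since $\overline{Q_F} \cap Q_E = \emptyset$, the selected paper is never compromised, which settles the freedom-from-compromise half of the proposition.

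For the randomness assertion I would trace the entropy back to the timestamp. The prime pair $(p_{d_\tau^l}, p_{d_\tau^{sl}})$ used in Eq.~(\ref{eq:qspsel}) is chosen by the last and second-last digits of $\tau$, and $\tau$ is the current time captured in milliseconds at the exact instant of selection (as in the worked example $\tau = 1515552555821$). I would argue that an adversary cannot fix this instant in advance, so the digits $d_\tau^l, d_\tau^{sl}$ range freely over $\{0,\dots,9\}$, each of the $100$ ordered pairs selecting a product $p_{d_\tau^l}\,p_{d_\tau^{sl}}$ of two large primes from $\overline{\rho}$ in Eq.~(\ref{eq:prime}); reducing this product modulo $Q_{fn}$ then scatters $Q_s$ across the index range (note that, since $Q_{fn} \equiv 0 \pmod{Q_{fn}}$, the subtracted $Q_{fn}$ term only keeps the quantity non-negative and does not affect the residue). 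Because the capture instant is unknown in advance and the primes are large, the resulting $Q_s$ is practically impossible to anticipate or bias, which is the sense in which the selection is random.

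The hard part will be justifying ``totally random'' rigorously: the map from $(d_\tau^l, d_\tau^{sl})$ to $Q_s$ passes through only $100$ prime pairs followed by a modular reduction, so it is not obviously a uniform distribution over $\overline{Q_F}$, and one could imagine certain residue classes of $Q_{fn}$ being hit more often than others. I would therefore frame the claim as computational unpredictability rather than exact equidistribution, leaning on the millisecond-precision entropy of $\tau$ and the infeasibility of predicting or controlling the capture instant; I would also make explicit the side condition $p_{d_\tau^l} \ge Q_{fn}$ stated after Eq.~(\ref{eq:qspsel}), so that the construction remains well defined for every admissible timestamp.
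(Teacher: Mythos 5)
Your proposal is correct and follows essentially the same route as the paper: the freedom-from-compromise half rests on the exclusion set in Eq.~(\ref{qsp:excl}), and the randomness half rests on the timestamp-digit-driven choice of primes from $\overline{\rho}$ fed into Eq.~(\ref{eq:qspsel}). The one substantive difference is that you are more candid than the paper — you correctly flag that the selection passes through only $100$ prime pairs and a modular reduction, so ``totally random'' is really only computational unpredictability rather than provable uniformity over $\overline{Q_F}$, a caveat the paper's proof glosses over by simply asserting that Eq.~(\ref{eq:qspsel}) ``gives a random QSP number.''
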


\begin{proof}
Before the exam, BSSSQS$_{\textup{master}}$ selects a QSP and sends that QSP reference to BSSSQS$_{\textup{minion}}$. This process is totally random. Before selecting a QSP,  BSSSQS$_{\textup{master}}$ selects a set of $10$ prime numbers. Each prime number is selected following the uniform distribution, as shown in Eq. (\ref{eq:prime}). Let, $\mathbb(P)$ is the set of prime numbers and $\overline{\rho}$ is the set of already selected prime numbers. So, the probability of selecting prime numbers $P(\mathbb P - \overline{\rho})$. However, after selecting the set of prime numbers, two prime numbers are picked from that set, based on a timestamp value, for further processing. Finally, QSP is selected by employing Eq. (\ref{eq:qspsel}) which gives a random QSP number among the set of QSPs. BSSSQS$_{\textup{master}}$ is well protected scheme. By any chance, if any QSP becomes compromised, BSSSQS$_{\textup{minion}}$ notifies BSSSQS$_{\textup{master}}$ about that question. BSSSQS$_{\textup{master}}$  excludes that compromised QSP from the selection process by employing Eq. (\ref{qsp:excl}) which makes the selection process free from compromised QSPs.
\end{proof}

\begin{proposition}
\label{prop:2}
QSPs and smart contract are secure from physical attacks by both insiders and outsiders.
\end{proposition}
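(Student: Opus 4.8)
The plan is to follow the same adversarial template used in Proposition~\ref{prop:1}, partitioning the threat into two access capabilities and showing that neither yields a readable QSP or a usable smart contract. First I would fix the attacker's view: a physical attack lets the adversary read the raw contents of storage, which at a BSSSQS$_{\textup{minion}}$ means the blockchain blocks holding the encrypted QSPs $\mathbb{Q}^2_i$ and the encrypted smart contract $\mathbb{C}_{sm}$, and at the BSSSQS$_{\textup{master}}$ means the stored hashes and the contract decryption key. The goal is then to argue that mere possession of these ciphertexts is insufficient to recover the plaintext questions.

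The core of the argument rests on the two-phase encryption of Eq.~(\ref{eq:1}) and the contract encryption of Eq.~(\ref{smart:enr}). I would observe that each $\mathbb{Q}^2_i$ for $i>0$ is sealed under the composite key $\{\tau_c, \mu_{\mathbb{Q}^\hbar}, \mathbb{S}_\hbar\}$, where $\mathbb{S}_\hbar$ is a random salt never stored in cleartext at the minion and $\mu_{\mathbb{Q}^\hbar}$ is the union of prior block hashes, so recovering a QSP reduces to first obtaining the smart-contract key $\Theta(\tau_{sm}, \mathbb{S}_r)$. Here I would reuse the key-space bound of Proposition~\ref{prop:1}: guessing $\mathbb{S}_r$ (and hence the derived key) is infeasible, and the key itself resides only in the master's DB, released solely at exam time. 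This is where the timestamp-based lock does the work, since before the permitted time no key is transmitted and a physically captured minion therefore holds only ciphertext.

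Next I would handle tampering and unauthorized requests, the distinctively ``physical'' part of the claim, by appealing to the blockchain's self-defence described after Fig.~\ref{fig:fig7}: when BC receives an unauthorized request it mutates the access time and nonce and re-mines the chain, changing the signature of every QSP and destroying the correspondence needed to recover a key hash. Thus an insider with legitimate physical access who requests out of turn triggers re-mining, while an outsider lacking any valid signature (by the signature requirement established in the QUS/QC analysis) cannot even form an authorized request; combining the two cases gives the conclusion that neither party extracts a readable QSP or a functional contract.

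The main obstacle I anticipate is that ``secure from physical attacks'' is not a cryptographically precise statement, so the hard part is pinning down an adequate adversary model rather than any calculation. In particular, a rigorous argument must address the case where a physical attacker simply waits until exam time, when the master legitimately releases $\Theta(\tau_{sm}, \mathbb{S}_r)$ and a compromised minion could in principle decrypt; I would need to argue that even then the re-mining response and the per-exam freshness of $\tau_{sm}$ and $\mathbb{S}_r$ bound the exposure, and I expect this temporal corner case, rather than the key-space estimate, to be the crux requiring the most care.
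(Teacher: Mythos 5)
Your proposal follows essentially the same route as the paper's proof: a timestamp lock that prevents pre-exam key release, detection of unauthorized access that mutates the block metadata and breaks the chain (so the master excludes the compromised minion), and a reduction of the remaining threat to guessing the encryption keys, which is dismissed by the $2^{-256}$ key-space bound of Proposition~\ref{prop:1}. Your closing observation about the attacker who simply waits until the key is legitimately released at exam time is a real corner case that the paper's proof does not address, but it does not change the fact that the two arguments coincide in structure and substance.
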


\begin{proof}
Physical attacks are one kind of attack which involves exploiting the weakest point by the attacker to breach the security system. There are different kinds of physical attacks, such as (1) walk-in, (2) break-in, (3) sneak-in, and (4) damage equipment. However, attackers may not always come from outside. Sometimes a person from the inside may also harm the system. As we discussed in Introduction that sometimes teacher or authority may leak the question, so it's very important to give protection from the attack which is caused by both outsider and insider. BSSSQS imposes a timelock on the QSPs and the smart contract. If anyone tries to access both of them before the allowed time, the system notifies not only BSSSQS$_{\textup{master}}$ but also  BSSSQS$_{\textup{minion}}$. Suppose, an attacker X from inside wants to steal QSPs. X disables the connection of BSSSQS$_{\textup{minion}}$ and tries to copy QSPs from disk. QSPs access permission is locked which can be unlocked by the permission from BSSSQS$_{\textup{master}}$. However, X somehow bypasses the access protection and start to copy. An internal monitoring system monitors this activity and changes QSPs auditing time, which changes the hash of QSPs and breaks the chain of the block.  When BSSSQS$_{\textup{minion}}$ comes to online, BSSSQS$_{\textup{minion}}$ notifies BSSSQS$_{\textup{master}}$ and BSSSQS$_{\textup{master}}$ excludes the BSSSQS$_{\textup{minion}}$ from taking exam due to QSPs blocks are not chained together. However, after copying the QSPs, X needs a private key to unlock both QSPs and smart contract which are encrypted by employing  Eq. (\ref{eq:1}) and Eq. (\ref{smart:enr}) respectively. In \textbf{Proposition \ref{prop:1}}, we have discussed that it's not feasible to guess a 256-bit key. Therefore, copying the QSPs is not going to help X. This outcome is the same for the outsider also. In this way, QSP and smart contract are secure from physical attacks by both insiders and outsiders.
\end{proof}

\begin{proposition}

Smart contract is well protected because it contains keys for unlocking QSP.

\end{proposition}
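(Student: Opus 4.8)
The plan is to mirror the structure of Proposition~\ref{prop:1} and reduce the security of the smart contract to the secrecy of a single $256$-bit key. First I would observe that the keys for unlocking the QSPs are never held in the clear: the smart contract $sm$ is sealed via Eq.~(\ref{smart:enr}) under the key $\Theta(\tau_{sm},\mathbb{S}_r)$, so any adversary A who wishes to extract the QSP-unlocking keys must first recover the decryption key $\Theta(\tau_{sm},\mathbb{S}_r)$ of $\mathbb{C}_{sm}$. This observation reduces the entire problem to protecting that one key.

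Next I would enumerate the only two avenues open to A and close each in turn. The first is to obtain the key directly; I would rule this out by recalling that after encryption CG stores the key solely in the master's DB, and BSSSQS$_{\textup{master}}$ releases it only when the exam begins, bundled with the selected question hash, so the key is never at rest on any minion nor in transit before the permitted moment. The second avenue is to reconstruct the key from its inputs $\tau_{sm}$ and $\mathbb{S}_r$. Here I would invoke the one-way property of $\Theta(.)$, which makes inverting the generated key infeasible, together with the fact that $\mathbb{S}_r$ is a freshly drawn random salt. A brute-force search over the $256$-bit output of $\Theta$ then confronts exactly the $2^{-256}$ success probability established in Proposition~\ref{prop:1}, which is not practically feasible.

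Finally I would invoke the timestamp lock to cover the residual case in which the key is somehow leaked early: since the decryption step is gated by the permitted time and SCM refuses to transmit any request to the blockchain beforehand, mere possession of the key does not yield the contents prematurely. Combining these, A can neither read $sm$ directly nor recover its key, so the smart contract, and hence the QSP-unlocking keys it carries, remains protected.

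The main obstacle I anticipate is the low entropy of the timestamp $\tau_{sm}$: unlike a random key, a timestamp is highly predictable, so the argument cannot rest on the secrecy of $\tau_{sm}$ at all. The delicate step is therefore to show that the random salt $\mathbb{S}_r$, passed through the one-way function $\Theta$, alone supplies the full unpredictability of the key, that is, that an adversary who correctly guesses $\tau_{sm}$ gains no measurable advantage. Making this precise would, strictly speaking, require treating $\Theta$ as a good key-derivation function whose output is indistinguishable from random once seeded with sufficient salt entropy, an assumption the informal proof is likely to leave implicit.
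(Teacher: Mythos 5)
Your proposal is correct and follows essentially the same route as the paper: establish that the QSP-unlocking keys ($\tau_{sm}$, $\mathbb{S}_r$-derived material and the QSP hashes) live only inside the contract, reduce its protection to the encryption of Eq.~(\ref{smart:enr}) plus the timestamp lock, and dismiss key recovery via the $2^{-256}$ guessing argument of Proposition~\ref{prop:1} (the paper routes this through Proposition~\ref{prop:2} rather than citing Proposition~\ref{prop:1} directly). Your closing observation that the key's unpredictability must come entirely from $\mathbb{S}_r$ through the one-way function $\Theta$, since $\tau_{sm}$ is low-entropy, is a refinement the paper leaves implicit but does not contradict.
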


\begin{proof}
While sending QSPs to BSSSQS$_{\textup{minion}}$, a smart contract is sent along with QSPs. QSPs are not only encrypted using their hashes, but also using a timestamp $\tau_c$ and a random hash $\mathbb{S}_\hbar$. In order to decrypt the selected QSP by employing Eq. (\ref{eq:dec1}) and Eq. (\ref{eq:dec2}), BSSSQS$_{\textup{minion}}$ also needs $\tau_c$ and $\mathbb{S}_\hbar$ along with the selected QSP hash. BSSSQS$_{\textup{master}}$ creates a smart contract containing hashes of the QSPs,  $\tau_c$ and $\mathbb{S}_\hbar$. When a QSP's hash code is provided in the smart contract, the smart contract decrypts the provided QSP. Therefore, in order to decrypt a QSP, BSSSQS$_{\textup{minion}}$ has to go through the smart contract because smart contract holds the other two keys. However, before sending QSP, BSSSQS$_{\textup{master}}$ encrypts the smart contract using Eq. (\ref{smart:enr}). According to \textbf{Proposition \ref{prop:2}}, the smart contract is protected using a timestamp lock along with an encryption technique, and breaking the security is not feasible. Thus, we can say that the smart contract is well protected and holds keys to unlock the QSP.
\end{proof}

%

\begin{proposition}
QSPs in BSSSQS$_{\textup{minion}}$ are tamper-proof.
\end{proposition}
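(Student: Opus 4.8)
The plan is to formalize ``tamper-proof'' as the statement that no adversary (insider or outsider) who obtains read/write access to the blockchain storage held in the \textbf{BC} module of a \text{BSSSQS$_{\textup{minion}}$} can alter the content of any stored QSP block $\mathbb{Q}^2_i$ and have the altered block be subsequently accepted and decrypted as a legitimate exam QSP. I would argue this by combining three ingredients already available in the paper: (i) the cryptographic chaining of blocks induced by the second-phase encryption in Eq. (\ref{eq:1}); (ii) the collision resistance and one-wayness of the hash $\mathbb{Q}^{\hbar}_i = \mathbb{SHA}256(\mathbb{Q}^2_i)$; and (iii) the cross-verification of hashes between the minion and \text{BSSSQS$_{\textup{master}}$}, which retains the authoritative hash of every QSP in its \textbf{DB}.

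First I would make the chaining dependence explicit. For $i>0$ the encrypted block $\mathbb{Q}^2_i$ is produced using the cumulative quantity $\mu_{\mathbb{Q}^\hbar} = \bigcup_{\sigma=0}^{i-1} \mathbb{Q}^{\hbar}_\sigma$, and in addition each block header stores the previous block's hash. Thus block $i$ is cryptographically bound to all of its predecessors. Suppose an adversary modifies a single stored block $\mathbb{Q}^2_j$ to $\widetilde{\mathbb{Q}}^2_j \neq \mathbb{Q}^2_j$. Its recomputed hash $\mathbb{SHA}256(\widetilde{\mathbb{Q}}^2_j)$ differs from the stored $\mathbb{Q}^{\hbar}_j$, since by collision resistance the adversary cannot produce distinct content hashing to the same value. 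Consequently every later block, whose header and whose encryption via $\mu_{\mathbb{Q}^\hbar}$ depend on $\mathbb{Q}^{\hbar}_j$, now references a stale hash, so the chain is broken. To restore local consistency the adversary would have to recompute and re-mine blocks $j, j+1, \dots$ in their entirety.

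Next I would close off this ``re-mine everything'' escape, which is the crux of the argument. Even a fully re-mined local chain must agree with the authoritative hashes retained by \text{BSSSQS$_{\textup{master}}$}: on any access or authorization request the \textbf{BC} module performs a cross-check with the master through \textbf{MB}, so a hash mismatch is detected immediately. Moreover, as established in \textbf{Proposition \ref{prop:2}}, the internal monitoring system responds to unauthorized manipulation by changing the access time and nonce and re-mining, after which the master excludes that minion from the exam precisely because its blocks are no longer chained together. Finally, to fabricate a replacement ciphertext that both decrypts to chosen plaintext and matches the master's records, the adversary would need the $256$-bit encryption key, which by \textbf{Proposition \ref{prop:1}} is infeasible to guess. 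Assembling these observations yields that any tampering is either detected (broken chain, hash mismatch, minion exclusion) or computationally infeasible, establishing the claim.

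The hard part will be making the cross-verification step airtight rather than hand-wavy: I must argue that the master's retained hashes together with the mandatory \textbf{MB} cross-check leave the adversary no consistent global state to forge, so that the collision resistance of $\mathbb{SHA}256$ and the key-space bound of \textbf{Proposition \ref{prop:1}} can actually be invoked to rule out the re-mined chain. In other words, the delicate point is coupling the local chaining argument to the global master-side verification so that ``break the chain'' genuinely implies ``detected or infeasible,'' with no gap where a silently re-mined minion chain could pass.
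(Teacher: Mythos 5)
Your argument is correct and rests on the same core mechanism as the paper's proof --- hash chaining means that altering any stored block changes its hash and breaks the chain, so tampering is detectable --- but you go substantially further than the paper does. The paper's own proof is only the generic tamper-evidence argument: a change to a transaction alters the Merkle tree, hence the block hash, hence the link to the next block, and therefore "it can be detected very easily"; it stops there. It never formalizes what "tamper-proof" means, never considers the adversary who simply re-mines the local chain from the modified block onward to restore internal consistency, and never invokes the master's retained hashes, the \textbf{MB} cross-check, or the key-space bound of \textbf{Proposition \ref{prop:1}}. Those are exactly the ingredients you add, and they are the ones that upgrade "tamper-evident locally" to "tampering is either detected globally or computationally infeasible," which is the claim actually worth proving. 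Your additions are all grounded elsewhere in the paper (the master stores $\mathbb{Q}^{\hbar}_i$ in its DB, BC performs a final authorization check with the master through MB, and the master excludes a minion whose blocks are no longer chained, as used in the physical-attack proposition), so you are assembling a more complete proof from the paper's own components rather than importing new assumptions. What your route buys is closure of the re-mining loophole that the paper's proof silently leaves open; what the paper's route buys is brevity, at the cost of proving only detection within a single minion's storage. The one point to be careful about, which you already flag, is that the master-side cross-check must be mandatory on every access path for your coupling argument to close; the paper asserts this in the transaction description but does not restate it in the proof.
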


\begin{proof}
QSPs in BSSSQS$_{\textup{minion}}$ are stored in blockchain. In blockchain, data are stored in blocks which are linked together~\cite{1}. For example, $(n+1)^{th}$ block holds the hash of $n^{th}$ block. In blockchain, a block contains a nonce, timestamp, transactions and a Merkle tree which contains hashes of the transactions. A hash is the identity of a block which is generated from aforementioned properties.
If any value in the transaction changes, the value of Merkle tree also changes. As Merkle tree value changes, the hash of the block also changes as the hash is totally dependent on the value of each property. In this way, if any illegal changes occur, it can be detected very easily in the blockchain. However, any illegal activities on QSPs are also detectable as QSPs are stored in blockchain. Thus, QSPs become tamper-proof. 
\end{proof}

\begin{proposition}
QSPs are distributed among  all BSSSQS$_{\textup{minion}}$ in order to reduce the load of BSSSQS$_{\textup{master}}$.
\end{proposition}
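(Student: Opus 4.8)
The plan is to argue directly from the division of responsibilities established in Section \ref{sec:3}, since the claim is architectural: it asserts that replicating the encrypted QSPs across the edge entities relieves the central entity. First I would recall the two facts already fixed by the construction. On one hand, BSSSQS$_{\textup{master}}$ performs the two-phase encryption, producing for each QSP both the ciphertext block $\mathbb{Q}^2_i$ and its digest $\mathbb{Q}^{\hbar}_i = \mathbb{SHA}256(\mathbb{Q}^2_i)$, and \emph{retains only the digests} $\{\mathbb{Q}^{\hbar}_i\}$ in its DB. On the other hand, as noted in Section \ref{sec:2}, the final version of the QSPs is stored only in the minion entities, each BSSSQS$_{\textup{minion}}$ holding a replica of the block set in its blockchain.

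Next I would decompose the notion of ``load'' into storage, computation, and communication, and show that each is shifted away from the master. For storage, I would contrast the proposed design against a hypothetical centralized one in which the master keeps all $n$ encrypted blocks: since a SHA-256 digest has fixed length independent of the QSP size, the master's footprint collapses from the full ciphertext corpus to $O(n)$ fixed-size hashes. For computation, I would observe that the per-exam decryption (the two decryption phases yielding $D_{Q_1}$ and $D_{Q_2}$) is executed entirely inside each BSSSQS$_{\textup{minion}}$ by its SCM, once the master releases the key and the selected hash; hence the decryption cost is borne locally and in parallel by the minions rather than serially by the master. For communication, I would note that because every minion owns a local replica, it serves its own users without routing content requests back to the master, whose only residual interaction is the lightweight authorization check relayed through MB.

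Finally I would conclude by collecting these observations: the master's sustained workload reduces to encryption, hash bookkeeping, QSP selection, and key release, all independent of the number of users served at exam time, whereas the storage- and decryption-heavy work scales across the $n$ minions. Therefore distributing the QSPs among all BSSSQS$_{\textup{minion}}$ reduces the load on BSSSQS$_{\textup{master}}$. The main obstacle is that ``load reduction'' is not an intrinsically quantitative statement here, so the crux is to commit to a concrete cost model --- bytes stored, decryption operations performed, and messages exchanged --- against which the distributed design can be compared to a centralized baseline, rather than merely asserting the reduction informally.
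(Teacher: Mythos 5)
Your argument is correct and follows essentially the same route as the paper's own proof: both rest on the observation that, because every BSSSQS$_{\textup{minion}}$ already holds a blockchain replica of the encrypted QSPs, at exam time BSSSQS$_{\textup{master}}$ need only transmit the selected QSP's hash and the key for the smart contract rather than the QSPs themselves. Your explicit decomposition of ``load'' into storage, computation, and communication is a tidier presentation of the same architectural point; the paper argues it more informally and adds only one detail you omit, namely that each minion deletes its QSPs after the exam to keep its own storage bounded.
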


\begin{proof}
Blockchain is a distributed ledger which is distributed among the participating nodes and each participant holds the same copy of the data~\cite{1}. BSSSQS applies this concept in order to distribute the QSPs to BSSSQS$_{\textup{minion}}$ which reduces its load from distributing QSPs at the exam time and to maintain the security of these QSPs remotely. When it's time for the exam, \\BSSSQS$_{\textup{master}}$ just passes the hash of a QSP and a secret key for unlocking the smart contract instead of a whole QSP which reduces the load on the system. However, QSPs are occupying storage space of BSSSQS$_{\textup{minion}}$ and with the increase of QSPs, storage space also occupies very quickly, which may create issues in the space of the storage. In order to mitigate that, when the exam finishes, BSSSQS$_{\textup{minion}}$ removes QSPs from its storage and makes space free. With the help of blockchain, it becomes very easy to manage QSPs securely in a distributed way. 
\end{proof}

\subsection{Performance comparison}

A comparative study between BSSSQS and existing models, such as Chang et al.\cite{44}, Kaya et. al\cite{45}, Yang\cite{46}, Lu et al.\cite{47}, Zhai et al.\cite{48}, Guzman et al.\cite{52},  Henke\cite{49}, Rashad et al.\cite{53}, Vasupongayya et al.\cite{54}, Sheshadri et al.\cite{55}, Younis et al.\cite{50}, is provided in Table \ref{table:2} by considering the following features to compare the proposed BSSSQS with the existing examination management system models. 


\begin{itemize}
  \item Secure login - This feature covers security in the login process like password encryption, random password, etc. Proposed BSSSQS along with all of the existing models which we mentioned in Table \ref{table:2} ensure secure login for the users.   
  \item Random QSP generation - This feature generates QSP randomly from a list of questions. So that, the person who provides questions cannot predict which question may come in the exam. BSSSQS randomly generates QSPs from the provided questions and among the existing systems, Chang et al.\cite{44}, Lu et al.\cite{47}, Zhai et al.\cite{48},  Kaya et. al\cite{45}, and Younis et al.\cite{50} applied this feature in their proposed models.
   
  \item QSP encryption - This feature encrypts QSP so that unauthorized persons can not access it.  QSPs can be encrypted employing symmetric encryptions or asymmetric encryptions. BSSSQS provides two-phase encryption process so that QSPs can be more secure and among the existing systems no one mentioned about encrypting QSPs. 
  \item Random QSP selection - This feature supports the selection of a QSP randomly before the exam.  The benefits of the random selection of QSP is that no one can guess the selected QSP. BSSSQS randomly select a QSP from the provided QSPs and among the existing systems, Henke\cite{49}, Kaya et. al\cite{45}, and Younis et al.\cite{50} supports this feature in their proposed models.
  \item Timestamp lock - This feature helps to impose a restriction of time on QSPs so that no one can access QSPs before the allowed time. BSSSQS imposes a timestamp lock on the QSPs along with an alert system. No one can access QSPs before the notified time and if anyone tries to access QSPs before the notification, an alert system notifies regarding this unauthorized activity and among of the existing models, only BSSSQS supports this concept.
  \item Blockchain support - This feature supports the inclusion of blockchain technology. In BSSSQS, QSPs store in the blockchain along with a smart contract.  BSSSQS utilizes the feature of data security and integrity of blockchain in order to maintain QSPs security throughout the system. Among of the mentioned models, only BSSSQS employs blockchain in the system.
  \item Distributed Sharing - This feature covers sharing of questions among the exam centre in a distributed way. BSSSQS shares QSPs among the remote exam centres with the support of blockchain. Due to blockchain, it becomes easy to maintain the security of QSPs remotely. However, apart from BSSSQS, others are fully centralized systems whose manages QSP and exam within itself.  
\end{itemize}

\begin{table*}
\caption{Performance comparison between BSSSQS and existing models}
\label{table:2}	
 \begin{tabular}{|>{\centering}p{2.7cm}|>{\centering}p{0.3cm} | >{\centering}p{0.3cm} | >{\centering}p{0.3cm} |  >{\centering}p{0.3cm} |  >{\centering}p{0.3cm} |  >{\centering}p{0.3cm} |  >{\centering}p{0.3cm} |  >{\centering}p{0.3cm} |  >{\centering}p{0.3cm} |  >{\centering}p{0.3cm} |  >{\centering}p{0.3cm} | p{0.3cm} |} 
 \hline
\rotatebox{45}{Features / Schemes}  & \rotatebox{90}{Yang\cite{46}} & \rotatebox{90}{Guzman\cite{52}} & \rotatebox{90}{Rashad\cite{53}} & \rotatebox{90}{Vasupongayya\cite{54} } & \rotatebox{90}{Sheshadri\cite{55}} & \rotatebox{90}{Chang\cite{44}} &   \rotatebox{90}{Lu\cite{47}} & \rotatebox{90}{Zhai\cite{48}}  &  \rotatebox{90}{Henke\cite{49}} &  \rotatebox{90}{Kaya\cite{45}} &  \rotatebox{90}{Younis\cite{50}}  & \rotatebox{90}{BSSSQS}\\ [0.5ex]
 \hline
 Secure login & yes & yes & yes & yes & yes & yes & yes & yes & yes & yes & yes & yes \\ 
 \hline 
 Random QSP generation & no & no & no  & no & no & yes & yes & yes & no & yes & yes & yes \\
 \hline 
 QSP encryption  & no & no & no  & no & no & no & no & no & no  & no&no & yes \\
 \hline 
 Random QSP selection & no & no & no  & no & no & no & no & no & yes & yes & yes& yes \\
 \hline 
 Timestamp lock  & no & no & no  & no & no & no & no & no & no & no & no & yes \\ 
 \hline 
 Blockchain support  & no & no & no  & no & no & no & no & no & no & no & no & yes \\ 
  \hline
 Distributed Sharing  & no & no & no  & no & no & no & no & no & no  & no & no& yes \\  [1ex] 
 \hline
 \end{tabular}
\end{table*}

\section{Conclusion}
	\label{sec:5}
	In this paper, we have proposed a secured QS scheme exploiting the security mechanism of blockchain. In this scheme, QSP experiences two-phase encryption in order to prevent unethical access before the exam. Moreover, a restriction of time is issued in the proposed scheme so that every minion has to wait for system permission in order to initiate the decryption process of QSP. Furthermore, QSP is selected by master employing the proposed randomize algorithm. The combination of these features can provide a secured QS system. We have analyzed BSSSQS's security by proposing different propositions and proved each proposition with respect to BSSSQS which demonstrate the feasibility of BSSSQS's security against different attacks. We have compared the performance of our proposed scheme with other existing education management techniques. Based on the theoretical comparison, it can be demonstrated that BSSSQS is more secure than other models. Due to the use of blockchain concept, unethical access to the proposed system will be most challenging. Moreover, we have designed BSSSQS in a flexible way, where the modules are loosely coupled; as a result, any module can be replaced with a new module easily. Therefore, it can be said that BSSSQS can be a promising approach for providing proper security to mitigate QPL problem in the future smart education system. Furthermore, the proposed system can also be used to share sensitive documents with little or no modification, which can be subjected to future works.

\section*{Acknowledgments}
This work was supported by the Global Excellent Technology Innovation Program (10063078) funded by the Ministry of Trade, Industry and Energy (MOTIE) of Korea.


\bibliography{BSSSQS}

\end{document}